\begin{document}
\pagestyle{headings} \mainmatter
\title{New Quadriphase Sequences
families with Larger Linear Span and Size
}


\author{Wenping Ma }
\institute{ National Key Lab. of ISN, Xidian University , Xi'an 710071, P.R.China\\
\email{wp\_ma@mail.xidian.edu.cn} }

\maketitle

\begin{abstract}
In this paper, new families of quadriphase sequences with larger
linear span and size have been proposed and studied. In particular,
a new family of quadriphase sequences of period $2^n-1$ for a
positive integer $n=em$ with an even positive factor $m$ is
presented, the cross-correlation function among these sequences has
been explicitly calculated. Another new family of quadriphase
sequences of period $2(2^n-1)$ for a positive integer $n=em$ with an
even positive factor $m$ is also presented, a detailed analysis of
the cross-correlation function of proposed sequences has also been
presented.

\end{abstract}

\section{Introduction}
\label{intro} Family of pseudorandom sequences with low cross
correlaton and large linear span has important application in
code-division multiple access communications and cryptology.
Quadriphase sequences are the one most often used in practice
because of their easy implementation in modulators. However, up to
now, only few families of optimal quadriphase sequences are found
\cite{1},\cite{5,6,7,8,9,10,11,12}.

Among the known optimal quadriphase sequence families, the most
famous ones are the families $\mathcal {A}$ and $\mathcal {B}$
investigated by Boztas, Hammons, and Kummar in\cite{5}. The family
$\mathcal {A}$ has period $2^n-1$ and family size $2^n+1$, while the
two corresponding parameters of the family $\mathcal {B}$ are
$2(2^n-1)$ and $2^{n-1}$, respectively. Another optimal family
$\mathcal {C}$ was discussed in \cite{10}, and this family has the
same correlation properties as the family $\mathcal {B}$. Families
$S(t)$ were defined by Kumar et.\cite{6}, and when $t=0$ or $m$ is
odd, the correlation distributions of families $S(t)$ are
established by Kai-Uwe Schmidt\cite{7}. Tang, Udaya, and Fan
generalized the family $\mathcal {A}$ and proposed a new family of
quadriphase sequences with low correlation in \cite{12}. By
utilizing a variation of family $\mathcal {B}$ and $\mathcal {C}$,
Tang and Udaya obtained the family $\mathcal {D}$, which has period
$2(2^n-1)$ and a larger family size $2^n$\cite{9}. Recently, Wenfeng
Jiang, Lei Hu, Xiaohu Tang, and Xiangyong Zeng proposed two new
families $\mathcal {S}$ and $\mathcal {U}$ of quadriphase sequences
with larger linear spans for a positive integer $n=em$ with an odd
positive factor $m$. Both families are asymptotically optimal with
respect to the Wech and Sidelnikov bounds. The family $\mathcal {S}$
has period $2^n-1$, family size $2^n+1$, and maximum correlation
magnitude $2^{\frac{n}{2}}+1$. The family $\mathcal {U}$ has period
$2(2^n-1)$, family size $2^n$, and maximum correlation magnitude
$2^{\frac{n+1}{2}}+2$ \cite{1}.

In this paper, motivated by the constructions proposed in
\cite{1,2,3,4,6}, the new families of quadriphase sequences with
larger linear span and size have been presented. As a special case
of the sequence families, a new family of quadriphase sequences of
period $2^n-1$ for a positive integer $n=em$ with an even positive
factor $m$ is presented, the cross-correlation function among these
sequences has been explicitly calculated. Another new family of
quadriphase sequences of period $2(2^n-1)$ for a positive integer
$n=em$ with an even positive factor $m$ is also presented, a
detailed analysis of the cross-correlation function of proposed
sequences has been presented. The sequences have low correlations
and are useful in code division multiple access  communication
systems and cryptography.

This paper is organized as follows. Section 2 introduces the
preliminaries and notations. In section 3, we give the constructions
and properties of the new sequences families $\mathcal {L}$ and
$\mathcal {V}$ with period $2^{n}-1$. The constructions and
correlation properties of the new sequences family $\mathcal {W}$
with period $2(2^{n}-1)$ are presented in section 4. The conclusions
and acknowledgement are presented in section 5 and 6 respectively.

\section{Preliminaries}
\subsection{Basic Concepts}

Let $a=\{a(t)\}$ and $b=\{b(t)\}$  be two quadriphase sequences of
period $L$, the correlation function $R_{a,b}(\tau)$  between them
at a shift $0\leq\tau\leq L-1$ is defined by
$$R_{a,b}(\tau)=\sum_{t=0}^{L-1}\omega^{a(t)-b(t+\tau)}$$
where $\omega^2=-1$.

Let $\mathcal {F}$ be a family of $M$ quadriphase sequences
$$\mathcal {F}=\{a_i=\{a(t)\}:1\leq i\leq M\}.$$

The maximum correlation magnitude $R_{max}$ of $\mathcal {F}$ is
$$R_{max}=\max\{|R_{a_i,a_j}(\tau)|:1\leq i,j\leq M ,\ \ i\neq j\ or\ \tau\neq 0\}.$$

\subsection{Galois Ring}
Let $Z_4[x]$ be the ring of all polynomials over $Z_4$ . A monic
polynomial $f(x)\in Z_4[x]$ is said to basic primitive if its
projection $\overline{f(x)}$
$$\overline{f(x)}=f(x)\ mod\ 2$$
is primitive over $Z_2[x].$

Let $f(x)$ be a basic primitive polynomial of degree $n$ over $Z_4$,
and $Z_4[x]/(f(x))$ denotes the ring of residue classes of
polynomials over $Z_4$ modulo $f(x)$. It can be shown that this
quotient ring is a commutative ring with identity called Galois
ring, denoted as $GR(4,n)$\cite{11}. As a multiplicative group, the
units $GR^*(4,n)$ have the following structure:
$$GR^*(4,n)=G_A\otimes G_C$$
where $G_C$ is a cyclic group of order $2^n-1$  and $G_A$ is an
Abelian group of order $2^n$. Naturally, the projection map
$\overline{a}$ from $Z_4$ to $Z_2$ induces a homomorphism from
$GR(4,n)$ to finite field $GF(2^n)$.

Let $\beta\in GR^*(4,n)$ be a generator of the cyclic group $G_C$ ,
then $\alpha=\overline{\beta}$  is a primitive root of $GF(2^n)$
with primitive polynomial $\overline{f(x)}$  over $Z_2$.

For each element $x\in GR(4,n)$  has a unique $2-adic$
representation of the form
\begin{equation}
\label{eq1}
x=x_0+2x_1,x_0,x_1\in G_C.
\end{equation}

Let $n=em$, The $Frobenius$ automorphism of $GR(4,n)$  over
$GR(4,e)$ is given by
 $$\sigma(x)=x_0^{2^e}+2x_1^{2^e}$$
 for any element $x$ expressed as (\ref{eq1}), and the trace function
 $Tr_e^n$ from $GR(4,n)$ to $GR(4,e)$ is defined by
 $$Tr^n_e(x)=x+\sigma(x)+\sigma^2(x)+\cdots+\sigma^{m-1}(x)$$
 where $\sigma^i(x)=\sigma^{i-1}(\sigma(x))$ for $1<i\leq m-1$.

Let $GF(q)$ is the finite field with $q$ elements, $tr^n_e(x)$ is
the trace function from $GF(2^n)$ to $GF(2^e)$, i.e.,
$$tr_e^n(x)=x+x^{2^e}+\cdots=x^{2^{e(\frac{n}{e}-1)}},x\in GF(2^n).$$
We have $\overline{Tr_e^n(x)}=tr_e^n(\overline{x})$, where $x\in
GR(4,n)$.

Throughout this paper, we suppose (1) $n=em$ with $e\geq 2\ and\
m\geq 2$,(2) $\lambda\in GR(4,e)$ such that $\overline{\lambda}\in
GF(2^e)\setminus\{1,0\}$.
\subsection{Linear Span}
Let
$$f(x)= Tr_1^n[(1+2\alpha)x]+2\sum_{i=1}^rTr_1^{n_i}(A_ix^{v_i}),\alpha\in G_C,A_i\in GF(2^{n_i}),x\in G_C,$$
where $v_i$ is a coset leader of a cyclotomic coset modulo
$2^{n_i}-1$, and $n_i|n$ is the size of the cyclotomic coset
containing $v_i$. For sequence $a=\{a_i\}$ such that
$$a_i=f(\beta),i=0,1,2,\cdots$$where $\beta$ is a primitive
element of $G_C$.

Linear span of a sequence $a$  is equal to $n+\sum_{i,A_i\neq 0}n_i$
, or equivalently, the degree of the shortest linear feedback shift
register that can generates $a$ \cite{1,4}.

\section{ New Quadriphase Sequences with Larger Size and Linear Span }
Define a function $P(x)$ over $GR(4,n)$ as
$$P(x)=\left\{\begin{array}{ll}$$\sum_{j=1}^{l-1}Tr_1^n(x^{2^{ej}+1})+Tr_1^{le}(x^{2^{le}+1}),if\
m=2l$$,\\\\$$\sum_{j=1}^{l}Tr_1^n(x^{2^{ej}+1}),if\
m=2l+1.$$\end{array}\right.$$

For any $x,y\in GR(4,n)$,  it is easy to check that\cite{1,2,3}
\begin{equation}
\label{eq2}
\begin{array}{lll}
2P(x)+2P(y)+2P(x+y) =2Tr_1^n[y(x+Tr^n_e(x))].
\end{array}
\end{equation}

\begin{definition}
Let $\rho$ be an integer such that
$1\leq\rho<\displaystyle{\lfloor\frac{n}{2}\rfloor}$  , a family of
quaternary sequences of period $2^n-1$, $\mathcal {L}=\{s_i(t):0\leq
t<2^{n}-1, 1\leq
i\leq 2^{\rho n}+1\}$ is defined by\\

$s_i(t)=\left\{\begin{array}{ll}$$Tr^{n}_{1}[(1+2\lambda_0^i)\beta^t]+2\sum_{k=1}^{\rho-1}Tr^{n}_{1}(\lambda_k^i\beta^{t(1+2^k)})+2P(\lambda\beta^t),1\leq i\leq 2^{\rho n},$$\\\\
$$2Tr^{n}_{1}(\beta^t),i=2^{\rho n}+1$$
\end{array}\right.$
where
$\{(\lambda_0^i,\lambda_1^i,\cdots,\lambda_{\rho-1}^i),i=1,2,\cdots,2^{\rho
n}\}$ is an enumeration of the elements of $G_C\times G_C\times
\cdots \times G_C$, $\beta$ is a generator element of group $G_C$.
\end{definition}

\begin{lemma}
\label{lem1} All sequence in $\mathcal {L}$  are cyclically
distinct. Thus, the family size of  $\mathcal {L}$  is $2^{n\rho}+1$
.

\end{lemma}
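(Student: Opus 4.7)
The plan is to show that if $s_i(t+\tau) = s_j(t)$ for all $t$ and some integer $\tau$, then $i = j$ and $\tau \equiv 0 \pmod{2^n-1}$; this simultaneously gives cyclic distinctness among distinct members and full period $2^n-1$ for each member, so that the family size is exactly $2^{\rho n}+1$.

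My first step will be to project everything modulo $2$. For $1 \le i \le 2^{\rho n}$ we have $\overline{1+2\lambda_0^i}=1$, the ``$2\cdot$'' terms vanish under projection, and the identity $\overline{Tr_1^n(x)} = tr_1^n(\overline{x})$ gives $\overline{s_i(t)} = tr_1^n(\alpha^t)$, a binary $m$-sequence of period $2^n-1$. By contrast, $s_{2^{\rho n}+1}(t) = 2\,Tr_1^n(\beta^t) \in 2Z_4$ projects to $0$. This immediately separates the last sequence from the first $2^{\rho n}$ under any shift. For $i = j = 2^{\rho n}+1$, the equality $s_i(t+\tau) = s_i(t)$ reduces (after dividing by $2$) to $tr_1^n(\alpha^{t+\tau}) = tr_1^n(\alpha^t)$, which forces $\tau \equiv 0 \pmod{2^n-1}$ by the $m$-sequence property.

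Next, for indices $1 \le i,j \le 2^{\rho n}$, the same mod-$2$ projection applied to $s_i(t+\tau)=s_j(t)$ again forces $\tau \equiv 0$, so I may assume $\tau=0$. Then the common term $2P(\lambda\beta^t)$ cancels and, since $-2 \equiv 2 \pmod 4$, the remaining equation $s_i(t)-s_j(t)=0$ lies entirely in $2Z_4$; dividing by $2$ turns it into the $GF(2)$-identity
$$tr_1^n\!\bigl[(\overline{\lambda_0^i}+\overline{\lambda_0^j})\alpha^t\bigr] + \sum_{k=1}^{\rho-1} tr_1^n\!\bigl[(\overline{\lambda_k^i}+\overline{\lambda_k^j})\,\alpha^{t(1+2^k)}\bigr] = 0, \quad \forall\, t.$$
The concluding step will invoke the standard linear independence of trace monomials attached to distinct cyclotomic cosets modulo $2^n-1$: under the hypothesis $\rho < \lfloor n/2 \rfloor$ the exponents $1,\,1+2,\,1+2^2,\ldots,1+2^{\rho-1}$ lie in pairwise distinct cosets (no two indices $k,l$ with $0 \le k < l \le \rho-1$ can satisfy $k+l \equiv 0 \pmod n$), so each coefficient $\overline{\lambda_k^i}+\overline{\lambda_k^j}$ must vanish. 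The bijection $G_C \to GF(2^n)^*$ then yields $\lambda_k^i = \lambda_k^j$ for all $k$, whence $i=j$ by the enumeration.

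The step I expect to be most delicate is the coset-distinctness together with the $2$-adic bookkeeping: one must verify that $\rho < \lfloor n/2 \rfloor$ really does place $1+2^k$ ($0 \le k \le \rho-1$) in distinct cyclotomic cosets, and handle the subtraction in $GR(4,n)$ carefully (in particular the collapse $-2 = 2$) before dividing by $2$. Once these are settled, the count $|\mathcal{L}| = 2^{\rho n}+1$ is immediate.
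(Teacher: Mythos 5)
Your proof is correct: reducing mod $2$ to the underlying $m$-sequence to force $\tau\equiv 0\pmod{2^n-1}$, and then invoking the uniqueness of the trace representation over the pairwise-distinct, full-size cyclotomic cosets of $1$ and $1+2^k$ ($1\le k\le\rho-1<\lfloor n/2\rfloor$) is exactly the standard argument, and it is the one the paper itself outsources to Lemmas 1 and 6 of \cite{4} without supplying any details. The only point to adjust is notational rather than mathematical: since $|G_C|=2^n-1$, the stated family size $2^{\rho n}+1$ is consistent only if the coefficients $\lambda_k^i$ actually range over the Teichm\"uller set $G_C\cup\{0\}$, in which case your final step should use the bijection $G_C\cup\{0\}\to GF(2^n)$, $x\mapsto\overline{x}$, instead of $G_C\to GF(2^n)^*$.
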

\begin{proof}
The proof of lemma 1 is similar to the proofs of Lemma 1 and Lemma 6
in \cite{4}, we cancel the details.
\end{proof}
\subsection{The Correlation Function of the Sequence Family}
 (1) Suppose $s_i,s_j$,$1\leq i,j\leq 2^{\rho n}$, are two sequences, the correlation function between $s_i$  and $s_j$  is
 \begin{equation}
 \label{eq3}
 R_{s_i,s_j}(\tau)=\sum_{x\in G_{C}}\omega^{Tr^n_1[(1+2\gamma^i_0-(1+2\gamma^j_0)\delta)x]+2\sum_{k=1}^{\rho-1}Tr_1^n(\eta_k x^{1+2^k})+2(P(\lambda x)+P(\lambda\delta x))}-1
\end{equation}
where $\delta=\beta^{\tau}$, $\tau\neq 0$, $\lambda_k^i-\delta
^{1+2^{k}} \lambda^j_k=\eta_k$,$k=1,2,\cdots,\rho-1$.
\begin{displaymath}
\begin{aligned}
&(R_{s_i,s_j}(\tau)+1)(R_{s_i,s_j}(\tau)+1)^* \\
\\
&=\sum_{s\in G_{C}}\sum_{y\in
G_{C}}\omega^{Tr_1^n[(1+2\gamma_{0}^{i}-(1+2\gamma_{0}^{j})\delta)x+2\sum_{k=1}^{\rho-1}Tr_1^n(\eta_kx^{1+2^k})+2(P(\lambda
x)+P(\lambda\delta x))]}\\
&\verb+           +
   \cdot\omega^{-Tr_1^n[(1+2\gamma_{0}^{i}-(1+2\gamma_{0}^{j})\delta)y+2\sum_{k=1}^{\rho-1}Tr_1^n(\eta_ky^{1+2^k})+2(P(\lambda
y)+P(\lambda\delta y))]} \\
\\
&=\sum_{x\in G_{C}}\sum_{y\in
G_{C}}\omega^{Tr_1^n((1+2\gamma_{0}^{i}-(1+2\gamma_{0}^{j})\delta)(x+3y))}\\\\
&\verb+          +
    \cdot\omega^{2[\sum_{k=1}^{\rho-1}Tr_1^n[\eta_k(x^{1+2^k}+y^{1+2^k})]+P(\lambda
x)+P(\lambda\delta x)+P(\lambda y)+P(\lambda\delta y)]} \\
\\
&=\sum_{x\in G_{C}}\sum_{y\in
G_{C}}\omega^{Tr_1^n((\Delta(x+3y))+2(\sum_{k=1}^{\rho-1}Tr_1^n[\eta_k(x^{1+2^k}+y^{1+2^k})]+P(\lambda
x)+P(\lambda\delta x)+P(\lambda y)+P(\lambda\delta y))} \\
\\
&=\sum_{z\in G_{C}}\sum_{y\in G_{C}}\omega^{Tr_1^n(\Delta
z)+2[\sum_{k=1}^{\rho-1}Tr_1^n[\eta_k((y+z+2\sqrt{yz})^{1+2^k}+y^{1+2^k})]+2Tr_1^n(\Delta\sqrt{yz})}\\\\
&\verb+           +
  \cdot\omega^{2P(\lambda(y+z+2\sqrt{yz}))+2P(\lambda\delta(y+z+2\sqrt{yz}))+2P(\lambda
y)+2P(\lambda\delta y)]} \\
\\
&=\sum_{z\in G_{C}}\omega^{\phi(z)}\sum_{y\in
G_{C}}\omega^{2[Tr_1^n(y(\Delta^2z))+v(y,z)]}
\end{aligned}
\end{displaymath}
where
$\Delta=1+2\gamma_{0}^{i}-(1+2\gamma_{0}^{j})\delta$,\ $x=y+z+2\sqrt{yz}$,\\
$\phi(z)=Tr_1^n(\Delta z)+2[P(\lambda z)+2P(\lambda\delta
z)]+2\sum_{k=1}^{\rho-1}Tr^{n}_{1}(\eta_k z^{1+2^k})$,\\\\
$v(y,z)=\sum_{k=1}^{\rho-1}Tr_1^n[\eta_k((y+z+2\sqrt{yz})^{1+2^k}+y^{1+2^k}+z^{1+2^k})]+P(\lambda(y+z+$\\
$2\sqrt{yz}))+P(\lambda\delta(y+z+2\sqrt{yz}))+P(\lambda
y)+P(\lambda\delta z)+P(\lambda z)+P(\lambda\delta z).$\\
Then, by (2), we have
$$2v(y,z)=2Tr_1^n[\lambda y(\lambda z+Tr^n_e(\lambda z))+\lambda\delta y(\lambda\delta z+Tr^n_e(\lambda\delta z))]+2Tr_1^n\sum_{k=1}^{\rho-1}[\eta_k(zy^{2^k}+z^{2^k}y)]$$
$$=2Tr^n_1[y(\lambda^2z+\lambda Tr_e^n(\lambda z)+\lambda^2\delta^2z+\lambda\delta Tr^n_e(\lambda\delta z)+\sum_{k=1}^{\rho-1}(\eta_k^{-2^k}z^{-2^k}+\eta_k z^{2^k}))].$$
Define
\begin{equation}
\label{eq4}
L(z)=\overline{\delta}tr^n_e(\overline{\delta}z)+tr^n_e(z)+\frac{1}{\overline{\lambda}^2}(\overline{\lambda}^2+1)(\overline{\delta}^2+1)z+\frac{1}{\overline{\lambda}^2}\sum_{k=1}^{\rho-1}(\overline{\eta}^{-2^k}_{k}z^{-2^k}+\overline{\eta}_k
z^{2^k}),
\end{equation}
where $z\in GF(2^{n})$, then $L(z)$ is a linear equation over
$GF(2^{n})$.

For $L(z)=0$, we have to count the number of solutions in the
equation
\begin{equation}
\label{eq5}
\overline{\delta}tr^n_e(\overline{\delta}z)+tr^n_e(z)+\frac{1}{\overline{\lambda}^2}(\overline{\lambda}^2+1)(\overline{\delta}^2+1)z+\frac{1}{\overline{\lambda}^2}\sum_{k=1}^{\rho-1}(\overline{\eta}^{-2^k}_{k}z^{-2^k}+\overline{\eta}_k
z^{2^k})=0
\end{equation}
for given $\eta_i$'s in $G_C$,$\lambda\in G_C$ such that
$\overline{\lambda}\in GF(2^e)\backslash\{0,1\}$, and
$\overline{\delta} \in GF(2^{n})\backslash\{0\}$.

It is easy to verify that
$\frac{1}{\overline{\lambda}^2}(\overline{\lambda}^2+1)(\overline{\delta}^2+1)z+\frac{1}{\overline{\lambda}^2}\sum_{k=1}^{\rho-1}(\overline{\eta}^{-2^k}_{k}z^{-2^k}+\overline{\eta}_k
z^{2^k})$ is not a constant polynomial of z, and the maximum number
of solutions of equation $L(z)=0$ is at most $2^{2(\rho-1)+2e}$.
Thus
\begin{equation}
\label{eq6}  |R_{s_i,s_j}(\tau)+1|\leq 2^{\frac{n+2(\rho-1)+2e}{2}}.
\end{equation}
(2) If $1\leq i\leq 2^{\rho n}$ and $j=2^{n\rho}+1$ are two
sequences, then the correlation function between $s_i$ and $s_j$ is
$$R_{s_i,s_j(\tau)}=\sum_{x\in G_C}\omega^{Tr_1^n[(1+2\gamma_1-2\delta)x]+2\sum_{k=1}^{\rho-1}Tr_1^n(\lambda_k^ix^{1+2^k})+2P(\lambda x)}-1,$$
similar to analysis above, the equation (\ref{eq4}) become the
following equation.
$$L(z)=\frac{1+\bar{\lambda}^{2}}{\bar{\lambda}^{2}}z+tr^n_e(z)+\frac{1}{\overline{\lambda}^2}\sum_{k=1}^{\rho-1}((\overline{\lambda}_k^i)^{-2^k}z^{-2^k}+\overline{\lambda}^i_kz^{2^k}).$$
Thus
\begin{equation} \label{eq7}  |R_{s_i,s_j}(\tau)+1|\leq
2^{\frac{n+2(\rho-1)+e}{2}}.
\end{equation}
(3) If $i=j=2^{\rho n}+1$, then $s_i$ is essentially a binary
$m-$sequence, Then $R_{s_i,s_j}(\tau)= -1$ for $\tau\neq 0$.\\\\
(4) Suppose that $s_i$,$s_j$, $1\leq i,j\leq 2^{\rho n}$, are two
sequences, then
$$R_{s_i,s_j}(0)=\sum_{x\in T}\omega^{2Tr_1^n[(\gamma^i_0+\gamma^j_0)x]+2\sum^{\rho-1}_{k=1}Tr_1^n(\eta_kx^{1+2^k})}-1,$$
similar to the analysis above, we have
\begin{equation}
\label{eq8} |R_{s_i,s_j(0)}+1|\leq 2^{\frac{n+2(\rho-1)}{2}}.
\end{equation}
It seems difficult to get tighter bound for
inequality(\ref{eq6})-(\ref{eq8}), thus we propose the following
open problem.\\
 \textbf{Open problem:}For $n=em$ , how many
solutions exist exactly for the equation (\ref{eq5}) over finite
field $GF(2^n)$.

Following the discussion above, we have the following theorem.
\begin{theorem}
For $n=em$ and an integer $\rho$ such that
$1\leq\rho<\displaystyle{\lfloor\frac{n}{2}\rfloor}$, the proposed
quadriphase family has $2^{n\rho}+1$ cyclically distinct binary
sequences of period $2^n-1$. The maximum correlation magnitude of
sequences is smaller than $1+2^{\frac{n+2(\rho-1)+2e}{2}}$.
Therefore, the sequences family constitutes a
$(2^n-1,2^{n\rho}+1,1+2^{\frac{n+2(\rho-1)+2e}{2}})$ quadriphase
signal set.
\end{theorem}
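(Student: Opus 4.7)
The plan is to assemble pieces already worked out in the subsection preceding the theorem statement; the theorem is essentially a repackaging of those calculations. Two ingredients suffice: Lemma~\ref{lem1}, which fixes the family size, and the four correlation bounds (\ref{eq6})--(\ref{eq8}) together with the trivial $m$-sequence case, which jointly control $R_{\max}$.

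First I would cite Lemma~\ref{lem1}: the $2^{\rho n}$ sequences indexed by tuples $(\lambda_0^i,\dots,\lambda_{\rho-1}^i)\in G_C^{\rho}$, together with the extra sequence $s_{2^{\rho n}+1}=2Tr_1^n(\beta^t)$, are pairwise cyclically distinct, so $|\mathcal{L}|=2^{\rho n}+1$. Next I would walk through the four correlation regimes handled immediately above: case~(1), with $1\le i,j\le 2^{\rho n}$ and $\tau\ne 0$, is controlled by (\ref{eq6}); case~(2), with one index equal to $2^{\rho n}+1$, by (\ref{eq7}); case~(3), both indices equal to $2^{\rho n}+1$, reduces to a binary $m$-sequence and yields $R_{s_i,s_j}(\tau)=-1$; case~(4), $\tau=0$ with $i\ne j$, by (\ref{eq8}). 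Since $e\ge 2$, the exponent in (\ref{eq6}) strictly dominates the others, so
\begin{equation*}
R_{\max}<1+2^{(n+2(\rho-1)+2e)/2},
\end{equation*}
which delivers the advertised $(2^n-1,\,2^{\rho n}+1,\,1+2^{(n+2(\rho-1)+2e)/2})$ quadriphase signal set.

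The hard part sits inside (\ref{eq6}), and this is where I would expect any genuine technical work: bounding the number of zeros in $GF(2^n)$ of the linearized polynomial $L(z)$ defined in (\ref{eq4})--(\ref{eq5}). My approach would be to multiply through by $z^{2^{\rho-1}}$ to clear the negative exponents $z^{-2^k}$, producing a $2$-linearized polynomial of $2$-degree at most $2(\rho-1)$; the two traces $tr^n_e(\overline{\delta}z)$ and $tr^n_e(z)$ each add a contribution to the kernel of $GF(2)$-dimension at most $e$, so the total root count is at most $2^{2(\rho-1)+2e}$. Substituting this root bound into the Cauchy--Schwarz-style squaring already carried out in the derivation of $|R_{s_i,s_j}(\tau)+1|^2$ produces exactly (\ref{eq6}). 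Any sharpening of the theorem's bound would require a finer analysis of the kernel of $L$, which is precisely the author's stated open problem.
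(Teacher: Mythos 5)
Your proposal matches the paper's own treatment: the paper likewise obtains the theorem by combining Lemma~1 for the family size with the case-by-case correlation bounds (\ref{eq6})--(\ref{eq8}) (plus the trivial $m$-sequence case), observing that the exponent in (\ref{eq6}) dominates; the paper simply states ``Following the discussion above, we have the following theorem.'' Your sketch of the root count for $L(z)$ is the same estimate the paper asserts (at most $2^{2(\rho-1)+2e}$ zeros), just spelled out slightly more explicitly.
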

\subsection{Linear Spans of the Sequence}
In order to express clearly, let

$s(\lambda_{0},\Lambda,t)=Tr^{n}_{1}[(1+2\lambda_0)\beta^t]+2\sum_{k=1}^{\rho-1}Tr^{n}_{1}(\lambda_k\beta^{t(1+2^k)})+2P(\lambda\beta^t)
$,\\ where $\Lambda=(\lambda_1,\cdots,\lambda_{\rho-1})$.

We divide the set $\Delta=\{1,2,\cdots,\rho-1\}$ into two sets $A$
and $B$ such that $\Delta=A\bigcup B$, where $A=\{ke+r:1\leq
k\leq\displaystyle{ \lfloor\frac{\rho-1}{e}\rfloor,0<r<e}\}$,
$B=\{ke:1\leq k\leq\displaystyle{\lfloor\frac{\rho-1}{e}\rfloor}\}$.
\begin{theorem}
(1)Consider a sequence represented by $s(\lambda_{0}, \Lambda,t)$
where $j\ \lambda_i$'s with $i\in A$ in
$\Lambda=(\lambda_1,\cdots,\lambda_{\rho-1})$ are equal to 0 and $l\
\lambda_i$'s with $i\in B$ in
$\Lambda=(\lambda_1,\cdots,\lambda_{\rho-1})$ are equal to
$\overline{\lambda}$.Let $LS_{j,l}(\rho)$ be the linear span of the
sequence.Then
$$LS_{j,l}(\rho)=n(\frac{m-1}{2}+\rho-1-\lfloor\frac{\rho-1}{e}\rfloor+1-j-l),0\leq j\leq |A|,0\leq l\leq |B|.$$
and there are $(^{\rho-1-\lfloor\frac{\rho-1}{e}\rfloor}_{\verb+   +
j})(_{\verb+  +
 l}^{\lfloor\frac{\rho-1}{e}\rfloor})2^{n} (2^n-1)^{\rho-1-j-l}$
sequences having linear span $LS_{j,l}(\rho)$. \\\\
(2) The linear span of the sequences $s_{2^{n\rho}+1}(t)$ is $n$.
\end{theorem}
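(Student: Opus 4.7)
The plan is to invoke the linear-span formula recalled in Subsection~2.3: once $s(\lambda_{0},\Lambda,t)$ is written as $Tr^{n}_{1}[(1+2\alpha)\beta^{t}]+2\sum_{i}Tr^{n_{i}}_{1}(A_{i}\beta^{tv_{i}})$ with the $v_{i}$ pairwise distinct cyclotomic coset leaders modulo $2^{n}-1$, the linear span equals $n+\sum_{i:A_{i}\neq 0}n_{i}$. The strategy is therefore (a) to expand the sequence into a sum of trace terms supported on distinct cyclotomic cosets, (b) to determine the size $n_{i}$ of each coset, (c) to identify exactly the conditions on $(\lambda_{1},\ldots,\lambda_{\rho-1})$ which make the coefficient of a given coset vanish, and (d) to count the tuples realizing each cancellation pattern.

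First I would substitute the definition of $P(x)$ for $m=2l$ to rewrite
\begin{displaymath}
\begin{array}{l}
s(\lambda_{0},\Lambda,t)=Tr^{n}_{1}[(1+2\lambda_{0})\beta^{t}]+2\sum_{k=1}^{\rho-1}Tr^{n}_{1}(\lambda_{k}\beta^{t(1+2^{k})})\\
\qquad+2\sum_{j=1}^{l-1}Tr^{n}_{1}(\lambda^{2^{ej}+1}\beta^{t(1+2^{ej})})+2Tr^{le}_{1}(\lambda^{2^{le}+1}\beta^{t(1+2^{le})}).
\end{array}
\end{displaymath}
Then I would split $\{1,\ldots,\rho-1\}$ as $A\cup B$ as in the theorem: for $k\in A$ the exponent $1+2^{k}$ is disjoint from every $P$-exponent $1+2^{ej}$, while for $k=ek'\in B$ it coincides with the $j=k'$ summand of $P$ (the bound $\rho<\lfloor n/2\rfloor=le$ forces $k'\leq\lfloor(\rho-1)/e\rfloor<l$, so only full-trace $P$-summands are hit). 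A short check using $2^{s}(1+2^{k})\pmod{2^{n}-1}$ shows that every coset $1+2^{k}$ with $1\leq k<n/2$ has size $n$, whereas the distinguished exponent $1+2^{le}$ has coset size $le=n/2$.

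Next I would read off, modulo~$2$, the combined coefficient of each distinct cyclotomic coset. The base coset $\{1\}$ always contributes $n$. A coset $1+2^{k}$ with $k\in A$ has coefficient $\overline{\lambda_{k}}$, hence cancels exactly when $\lambda_{k}=0$. A coset $1+2^{ek'}$ with $ek'\in B$ has combined coefficient $\overline{\lambda_{ek'}}+\overline{\lambda}^{2^{ek'}+1}$; since $\overline{\lambda}\in GF(2^{e})$ gives $\overline{\lambda}^{2^{ek'}+1}=\overline{\lambda}^{2}$, this coefficient vanishes for exactly one value of $\lambda_{ek'}$ (the distinguished value referred to in the statement). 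The remaining $l-1-|B|$ full-trace $P$-cosets have coefficient $\overline{\lambda}^{2}\neq 0$, and the half-size coset $1+2^{le}$ has coefficient $\overline{\lambda}^{2^{le}+1}\neq 0$. Summing $n$ for every surviving size-$n$ coset plus $le$ for the half-size coset yields
\begin{displaymath}
LS=n+n(|A|-j)+n(|B|-l)+n(l_{0}-1-|B|)+le,\qquad l_{0}=m/2,
\end{displaymath}
which simplifies to $n(\frac{m-1}{2}+\rho-1-\lfloor\frac{\rho-1}{e}\rfloor+1-j-l)$ as claimed. The enumeration factor then follows by choosing the $j$ zeroed positions in $A$ in $\binom{|A|}{j}$ ways, the $l$ cancelling positions in $B$ in $\binom{|B|}{l}$ ways, assigning to each of the surviving $\rho-1-j-l$ entries any of the $2^{n}-1$ non-cancelling values, and letting $\lambda_{0}$ range over all $2^{n}$ Teichm\"uller lifts.

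For part~(2), the sequence $s_{2^{n\rho}+1}(t)=2Tr^{n}_{1}(\beta^{t})$ is (twice) a binary $m$-sequence of period $2^{n}-1$, so its characteristic polynomial has degree $n$ and $LS=n$. I expect the main obstacle to be step~(b): carefully verifying the cyclotomic coset sizes and, in particular, ruling out any unexpected coincidences among the exponents $\{1+2^{k}:1\leq k<\rho\}\cup\{1+2^{ej}:1\leq j\leq l\}$ that would otherwise collapse independent trace terms. Once the coset structure is nailed down, the coefficient tracking and the counting step are routine applications of the formula from Subsection~2.3.
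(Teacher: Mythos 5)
Your proposal is correct and follows essentially the same route as the paper: decompose the sequence into trace terms over distinct cyclotomic cosets, apply the linear-span formula of Subsection~2.3, identify which coefficients can cancel (the $A$-cosets against $0$, the $B$-cosets against the corresponding $P$-summand), and count the tuples realizing each pattern. If anything you are more careful than the paper, which only argues the odd-$m$ case explicitly and asserts "each trace term has linear span $n$", whereas you correctly isolate the half-size coset $1+2^{le}$ contributing $le=n/2$ for even $m$, which is exactly what makes the $n\frac{m-1}{2}$ term in the formula come out right.
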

\begin{proof}
First, consider the linear span of sequences with $m$ is odd. A
sequence constructed above has a total of
$\displaystyle{\frac{m-1}{2}+\rho-1-\lfloor\frac{\rho-1}{e}\rfloor+1}$
trace terms and each trace term has the linear span of $n$. If $j\
\lambda$'s with $i\in A$ in
$\Lambda=(\lambda_1,\cdots,\lambda_{\rho-1})$ are equal to 0, and
$l\ \lambda_i$'s with $i\in B$ in
$\Lambda=(\lambda_1,\cdots,\lambda_{\rho-1})$ are equal to
$\overline{\lambda}^ {2^{i}+1}$, it has
$\displaystyle{\frac{m-1}{2}+\rho-1-\lfloor\frac{\rho-1}{e}\rfloor+1-j-l}$
nonzero trace terms and the corresponding linear span of the
sequences is given by
$$LS_{j,l}(\rho)=n(\frac{m-1}{2}+\rho-1-\lfloor\frac{\rho-1}{e}\rfloor+1-j-l),0\leq j\leq |A|,0\leq l\leq |B|.$$
Since (1) $j\ \lambda_i$'s with $i\in A$  are 0 and $(|A|-j)\
\lambda$ 's are nonzero, (2) $l\ \lambda_{i}$'s with $i\in B$ are
$\overline{\lambda}^{2^{i}+1}$ and $(|B|-l)\ \lambda_{i}$'s are not
equal to $\overline{\lambda}^{2^{i}+1}$, (3) the number of
$\lambda_{0}$ is $2^{n}$. Therefore, the number of corresponding
sequences given above is
\begin{displaymath}
\begin{aligned}
&(_{\verb+   +
j}^{\rho-1-\lfloor\frac{\rho-1}{e}\rfloor})(2^n-1)^{\rho-1-\lfloor\frac{\rho-1}{e}\rfloor-j}\cdot
(_{\verb+  +
 l}^{\lfloor\frac{\rho-1}{e}\rfloor})(2^n-1)^{\lfloor\frac{\rho-1}{e}\rfloor-l} \cdot 2^{n}
\\
\\
&=(_{\verb+  +j}^{\rho-1-\lfloor\frac{\rho-1}{e}\rfloor})\cdot
(_{\verb+  + l}^{\lfloor\frac{\rho-1}{e}\rfloor})2^{n}
(2^n-1)^{\rho-1-l-j}.
\end{aligned}
\end{displaymath}
Applying this result to each $j$ and each $l$, we obtain the linear
span of the proposed sequence. Using a similar approach to the odd
case, we see that the linear span of both sequences is same.
\end{proof}

For the sequences families above, some special conditions had
already been discussed, for example, the case with $n=em$, where $m$
is an odd, and $\rho=1$ had been discussed in \cite{1}. In the
following, we will discuss another special case with $n=em$, where
$m$ is an even, and $\rho=1$, we call this special sequence family
as family $\mathcal {V}$.
\subsection{Correlation Function of the Sequence family for even $m$ and $\rho=1$}
If $\rho=1$, then the equation (\ref{eq5}) becames
\begin{equation}
\label{eq9}
\overline{\delta}Tr^n_e(\overline{\delta}z)+Tr^n_e(z)+\frac{(\overline{\lambda}^2+1)(\overline{\delta}^2+1)}{\overline{\lambda}^2}z=0
\end{equation}
In the following, we study the solution of the equation (\ref{eq9}).
\\\\
Let $Tr_e^n(\overline{\delta}z)=a$, $Tr_e^n(z)=b$, then
$$z=\frac{\overline{\lambda}^2}{\overline{\lambda}^2+1}\frac{\overline{\delta} a +b}{\overline{\delta}^2+1}.$$
By computing $Tr^n_e(z)$ and $Tr^n_e(\overline{\delta}z)$, we
have\\
\begin{equation}
\label{eq10}
\left\{\begin{array}{lll}$$aTr^n_e(\displaystyle{\frac{\overline{\delta}}{\overline{\delta}^2+1}})+b[Tr^n_e(\frac{1}{\overline{\delta}^2+1})-\frac{\overline{\lambda}^2+1}{\overline{\lambda}^2}]=0$$\\\\
$$\displaystyle{a[tr^n_e(\frac{1}{\overline{\delta}^2+1})-\frac{\overline{\lambda}^2+1}{\overline{\lambda}^2}]+bTr^n_e(\frac{\overline{\delta}}{\overline{\delta}^2+1})=0}$$
\end{array}\right.
\end{equation}
The determinant of corresponding coefficient matrix of (\ref{eq10})
is equal to
\begin{displaymath}
\begin{aligned}
&[Tr^n_e(\frac{1}{\overline{\delta}^2+1})-\frac{\overline{\lambda}^2+1}{\overline{\lambda}^2}]^2+[Tr^n_e(\frac{\overline{\delta}}{1+\overline{\delta}^2})]^2\\\\
&=(\frac{\overline{\lambda}^2+1}{\overline{\lambda}^2})^2+[Tr^n_e(\frac{1}{1+\overline{\delta}})]^2
\end{aligned}
\end{displaymath}
(1) If $\displaystyle{Tr^n_e(\frac{1}{1+\overline{\delta}})\neq
\frac{\overline{\lambda}^2+1}{\overline{\lambda}^2}}$, then  the
determinant of coefficient matrix (\ref{eq10}) is not equal to zero,
the equation (\ref{eq10}) has unique solution $a=0$, $b=0$, then
$z=0$.\\
(2) If $\displaystyle{Tr^n_e(\frac{1}{1+\overline{\delta}})=
\frac{\overline{\lambda}^2+1}{\overline{\lambda}^2}}$, then the
determinant of coefficient matrix (\ref{eq10}) is equal to zero,
\begin{displaymath}
\begin{aligned}
&aTr^n_e(\frac{1}{\overline{\delta}+1})+aTr^n_e(\frac{1}{\overline{\delta}^2+1})+b[Tr^n_e(\frac{1}{\overline{\delta}^2+1})-\frac{\overline{\lambda}^2+1}{\overline{\lambda}^2}]=0,\\
\end{aligned}
\end{displaymath}
thus $a=b$, the equation (\ref{eq10}) has $2^e$ solutions, then
$z=\frac{\overline{\lambda}^2}{\overline{\lambda}^2+1}\frac{1}{\overline{\delta}+1}a$.

\begin{displaymath}
\begin{aligned}
&2P(\overline{\lambda}z)+2P(\overline{\lambda}\overline{\delta}z)=2\sum^{l-1}_{j=1}Tr^n_1[(\overline{\lambda}z)^{2^{ej}+1}+(\overline{\lambda}\overline{\delta}z)^{2^{ej}+1}]+2Tr^{le}_1[(\overline{\lambda}x)^{2^{le}+1}+(\overline{\lambda}\overline{\delta}z)^{2^{el}+1}]\\
&=2\sum^{l-1}_{j=1}Tr^n_1[(\frac{\overline{\lambda}^3a}{1+\overline{\lambda}^2})^{2^{ej}+1}(\frac{1}{1+\overline{\delta}})^{2^{ej}+1}+(\frac{\overline{\lambda}^3a}{1+\overline{\lambda}^2})^{2^{ej}+1}(\frac{\overline{\delta}}{1+\overline{\delta}})^{2^{ej}+1}]\\
&\verb+  ++2Tr^{el}_1[(\frac{\overline{\lambda}^3a}{1+\overline{\lambda}^2})^{2^{el}+1}(\frac{1}{1+\overline{\delta}})^{2^{el}+1}+(\frac{\overline{\lambda}^3a}{1+\overline{\lambda}^2})^{2^{el}+1}(\frac{\overline{\delta}}{1+\overline{\delta}})^{2^{el}+1}]\\
&=2Tr^e_1\{(\frac{\overline{\lambda}^3a}{1+\overline{\lambda}^2})^2[\sum^{l-1}_{j=1}Tr^n_e[(\frac{1}{1+\overline{\delta}})^{2^{ej}+1}+(\frac{\overline{\delta}}{1+\overline{\delta}})^{2^{ej}+1}]\\
&\verb+       + +Tr^{le}_e[(\frac{1}{1+\overline{\delta}})^{2^{el}+1}+(\frac{\overline{\delta}}{1+\overline{\delta}})^{2^{el}+1}]]\}\\
&=2Tr^{e}_{1}{[(\frac{\overline{\lambda}^{3}a}{1+\overline{\lambda}^{2}})^{2}}(Tr^{le}_{e}1+Tr^{n}_{e}(\frac{1}{1+\overline{\delta}}))]\\
&=\left\{\begin{array}{ll}$$2Tr^{n}_{1}(\frac{\overline{\lambda}^{2}z}{1+\overline{\lambda}^{2}}),\
for \ odd \ l
$$,\\\\$$2Tr^{n}_{1}(\frac{\overline{\lambda}^{2}z}{1+\overline{\lambda}}), \ for
\ even  \ l .$$\end{array}\right. \\
\end{aligned}
\end{displaymath}

Thus, $\phi(z)=Tr^n_1(\Delta z)+2[P(\lambda z)+P(\lambda\delta z)]$
$$=\left\{\begin{array}{ll}$$2Tr^{n}_{1}[(\overline{\gamma}_{0}^{i}+(\overline{\gamma}_{0}^{j}+1)\overline{\delta}+\frac{\overline{\lambda}^{2}}{1+\overline{\lambda}^{2}})z],\
\ for \ odd \ l $$ ,\\\\$$
2Tr^{n}_{1}[(\overline{\gamma}_{0}^{i}+(\overline{\gamma}_{0}^{j}+1)\overline{\delta}+\frac{\overline{\lambda}^{2}}{1+\overline{\lambda}})z],
\ for \ even \ l.$$ \end{array}\right. $$

Because the solutions space of equation (9) is a linear subspace,
following the discussions above, we have
\begin{theorem}
for $m$ is even, $\rho=1$, the nontrivial correlation function of
the proposed sequences family $\mathcal {V}$ takes values in
$\{-1,-1\pm 2^{\frac{n}{2}},-1\pm 2^{\frac{n}{2}}\omega,-1\pm
2^{\frac{n+e}{2}},-1\pm 2^{\frac{n+e}{2}}\omega \}$.
\end{theorem}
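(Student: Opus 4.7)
The plan is to specialise the $|R+1|^{2}$ identity derived in Section~3.1 to $\rho=1$, combine it with the case analysis of equation~(\ref{eq10}) just completed, and finally pass from the magnitude $|R+1|^{2}$ to the value $R+1$ using that $R+1\in\mathbb Z[\omega]$.

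Setting $\rho=1$ removes the cyclotomic terms in $v(y,z)$, so the identity reduces to
\[
|R_{s_i,s_j}(\tau)+1|^{2}=\sum_{z\in G_{C}}\omega^{\phi(z)}\sum_{y\in G_{C}}\omega^{2[Tr^{n}_{1}(y\Delta^{2}z)+v(y,z)]}.
\]
The inner sum over $y$ is, up to the single term $y=0$, the additive-character sum over $GF(2^{n})$ of an $\mathbb F_{2}$-linear form whose coefficient is (a unit multiple of) $L(z)$; by orthogonality it equals $2^{n}$ precisely when $L(z)=0$ and vanishes otherwise. Hence $|R_{s_i,s_j}(\tau)+1|^{2}=2^{n}\sum_{z\in V}\omega^{\phi(z)}$ with $V=\{z\in GF(2^{n}):L(z)=0\}$.

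Next I would invoke the two cases already separated in the discussion of~(\ref{eq10}). In Case~(1) we have $V=\{0\}$, the outer sum collapses to $1$, and $|R+1|^{2}=2^{n}$. In Case~(2), $V$ is the $e$-dimensional $\mathbb F_{2}$-subspace parametrised by $a\in GF(2^{e})$, and the closed form for $\phi(z)$ just established simplifies on $V$ to $\phi(z)=2\,Tr^{n}_{1}(\mu_{l}z)$ for an explicit $\mu_{l}$ depending on the parity of $l$. Since $\omega^{\phi(z)}=(-1)^{Tr^{n}_{1}(\mu_{l}z)}$, orthogonality of additive characters on the $\mathbb F_{2}$-subspace $V$ forces $\sum_{z\in V}\omega^{\phi(z)}\in\{0,\,2^{e}\}$. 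Combining the two cases yields $|R_{s_i,s_j}(\tau)+1|^{2}\in\{0,\,2^{n},\,2^{n+e}\}$.

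Finally, since $R+1$ is a sum of powers of $\omega$ it lies in the Gaussian integers $\mathbb Z[\omega]$; and since $n=em$ is even (and $e$ is implicitly even so that $n+e$ is even as well), unique factorisation in $\mathbb Z[\omega]$ forces every element of norm $2^{n}$ (respectively $2^{n+e}$) to be a unit multiple of $2^{n/2}$ (respectively $2^{(n+e)/2}$), i.e.~to belong to $\{\pm 2^{n/2},\pm\omega\,2^{n/2}\}$ (respectively $\{\pm 2^{(n+e)/2},\pm\omega\,2^{(n+e)/2}\}$); together with $R+1=0$ from the norm-zero case, this gives exactly the list asserted in the theorem. The main obstacle will be to verify that the linear form $\mu_{l}$ genuinely fails to vanish on $V$ in the generic situation, so that the outer sum actually takes the value $0$ (producing the correlation value $-1$) with positive frequency rather than always equalling $2^{e}$; tracking how $\mu_{l}$ depends on $\gamma_{0}^{i},\gamma_{0}^{j}$ and $\overline\delta$, and exploiting $\overline\lambda\in GF(2^{e})\setminus\{0,1\}$ to rule out degenerate cancellations, is where the real bookkeeping hides. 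A secondary point will be absorbing the $y=0$ correction to the inner sum into the $+1$ on the left-hand side so no off-list values appear.
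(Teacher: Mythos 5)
Your proposal follows essentially the same route as the paper: specialize the $|R+1|^{2}$ computation of Section~3.1 to $\rho=1$, reduce to the linear system (10) via $a=Tr^n_e(\overline\delta z)$, $b=Tr^n_e(z)$, split on whether the determinant vanishes so that the solution space $V$ has dimension $0$ or $e$, observe that $\phi$ restricts to a linear form $2Tr^n_1(\mu_l z)$ on $V$ so the outer character sum is $0$ or $2^{e}$, and finally convert $|R+1|^{2}\in\{0,2^{n},2^{n+e}\}$ into the stated value set. The only differences are that you make explicit two steps the paper leaves tacit --- the passage from norms to values via factorization in $\mathbb Z[\omega]$ (which, as you note, silently requires $n+e$ to be even) and the $y=0$/diagonal bookkeeping in the double sum --- both of which are refinements of, not departures from, the paper's argument.
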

\section{Quadriphase Sequences with period $2(2^n-1)$ }
Similar to the \cite{1,2}, for an even $m$, we propose the following
sequence family, the correlation function of the sequences family is
calculated.

In this section, let $G=\{\eta_1,\eta_2,\cdots,\eta_{2^{n-1}}\}$ be
a maximum subset of $G_C$ such that $2\eta_i\neq 2(\eta_j+1)$ for
arbitrary $1\leq i,j\leq 2^{n-1}$. By convention, denote
$\beta^{\frac{1}{2}}=\beta^{2^{n-1}}$. We present another family of
quadriphase sequences with period $2(2^n-1)$ as follows.
\begin{definition}
A family $\mathcal {W}$ of quadriphase sequences with period
$2(2^n-1)$ is
defined as $\mathcal {W}=\{u_i(t),v_i(t):0\leq i<2^{n-1}\}$ is given by\\
 1)\\
$u_i(t)=$$\left\{ \begin{array}{lll}
Tr^n_1[(1+2\eta_i)\beta^{t_0}]+2P(\lambda\beta^{t_0}),t=2t_0\\\\
Tr^n_1[(1+2(\eta_i+1))\beta^{t_0+\frac{1}{2}}]+2P(\lambda\beta^{t_0+\frac{1}{2}}),t=2t_0+1
\end{array}\right.$\\
for $0\leq i<2^{n-1}$, where $\eta_i\in G$.\\
2)\\
$v_i(t)=$$\left\{ \begin{array}{lll}
Tr^n_1[(1+2\eta_i)\beta^{t_0}]+2P(\lambda\beta^{t_0})+2,t=2t_0\\\\
Tr^n_1[(1+2(\eta_i+1))\beta^{t_0+\frac{1}{2}}]+2P(\lambda\beta^{t_0+\frac{1}{2}}),t=2t_0+1
\end{array}\right.$\\
for $0\leq i<2^{n-1}$, where $\eta_i\in G$.
\end{definition}
\begin{theorem}
the correlation functions of the family $\mathcal {W}$ satisfy the
following properties. 1) if $\tau=2^n-1$, then
$R_{u_i,u_j}(\tau)=-2$,
$R_{v_i,v_j}(\tau)=2$, $R_{u_i,v_j}(\tau)=0$. \\
2) if $\tau=0$, then $R_{u_i,v_j}(\tau)=0$ and
\begin{displaymath}
R_{u_i,u_j}(\tau)=R_{v_i,v_j}(\tau)=\left\{
\begin{array}{ll}
2(2^n-1),i=j\\
-2,i\neq j,
\end{array}\right.
\end{displaymath}
3)If $\tau=2\tau_0+1\neq 2^n-1$, then
\begin{displaymath}
\begin{aligned}
&a)\ R_{u_i,u_j}(\tau)\ takes\ values\ in\ \{-2,-2\pm
2^{\frac{n}{2}+1},-2\pm 2^{\frac{n+e}{2}+1}\},\\
&b)\ R_{v_i,v_j}(\tau)\ takes\ values\ in\ \{2,2\pm
2^{\frac{n}{2}+1},2\pm 2^{\frac{n+e}{2}+1}\},\\
&c)\ R_{u_i,v_j}(\tau)\ takes\ values\ in\ \{\pm
2^{\frac{n}{2}+1}\omega,\pm2^{\frac{n+e}{2}+1}\omega\}.
\end{aligned}
\end{displaymath}
4)If $\tau=2\tau_0$ and $\tau_0\neq 0$, then
\begin{displaymath}
\begin{aligned}
&a)\ R_{u_i,u_j}(\tau)\ takes\ values\ in\ \{-2,-2\pm
2^{\frac{n}{2}+1},-2\pm 2^{\frac{n+e}{2}+1}\},\\
&b)\ R_{v_i,v_j}(\tau)\ takes\ values\ in\ \{-2,-2\pm
2^{\frac{n}{2}+1},-2\pm 2^{\frac{n+e}{2}+1}\},\\
&c)\ R_{u_i,v_j}(\tau)\ takes\ values\ in\ \{\pm
2^{\frac{n}{2}+1}\omega,\pm2^{\frac{n+e}{2}+1}\omega\}.
\end{aligned}
\end{displaymath}
\end{theorem}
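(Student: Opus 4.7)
The plan is to split every correlation $R_{a,b}(\tau)=\sum_{t=0}^{2(2^n-1)-1}\omega^{a(t)-b(t+\tau)}$ according to the parity of $t$, noting that the parity of $t+\tau$ is governed by the parity of $\tau$ and that $u_i,v_i$ are defined by two different expressions on even and odd $t$. The four statements in the theorem then correspond to the four possibilities $\tau\in\{0,\,2^n-1,\,\text{even}\neq 0,\,\text{odd}\neq 2^n-1\}$. In the two exceptional shifts the $P$-terms cancel and each partial sum collapses to an elementary Gauss sum: for $\tau=0$ one gets $\sum_{x\in G_C}\omega^{2Tr^n_1((\eta_i-\eta_j)x)}$, which is $2^n-1$ if $i=j$ and $-1$ otherwise, and the constant $+2$ in the definition of $v_i$ on even $t$ contributes the $\omega^{\pm 2}=-1$ factor that gives $R_{u_i,v_j}(0)=0$. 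For $\tau=2^n-1$ the identity $\beta^{t_0+1/2}\cdot\beta^{1/2}=\beta^{t_0}$ (a consequence of $\beta^{2^n-1}=1$) converts the odd-position formula into the even-position formula with $\eta_j$ replaced by $\eta_j+1$; the hypothesis $2\eta_i\neq 2(\eta_j+1)$ defining $G$ forces the projections $\overline{\eta_i-\eta_j\mp 1}$ to be nonzero, so both partial Gauss sums equal $-1$, and (1) follows after tracking the $\omega^{\pm 2}$ factors coming from $v_i$.

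\textbf{Reduction to family $\mathcal{V}$.} For $\tau=2\tau_0\neq 0$ and $\tau=2\tau_0+1\neq 2^n-1$ I would rewrite each partial sum in the form already analyzed for family $\mathcal{V}$. Substituting $x=\beta^{t_0}$ in the even-$t$ piece and $x=\beta^{t_0+1/2}$ in the odd-$t$ piece, and using $-2P\equiv 2P\pmod 4$, each partial sum takes the form
\[
\sum_{x\in G_C}\omega^{Tr^n_1[(1+2\tilde\eta_i-(1+2\tilde\eta_j)\delta)x]+2P(\lambda x)+2P(\lambda\delta x)},
\]
with $\delta$ equal to $\beta^{\tau_0}$ or $\beta^{\tau_0+1/2}$, and $(\tilde\eta_i,\tilde\eta_j)$ equal to $(\eta_i,\eta_j)$ in one piece and $(\eta_i+1,\eta_j+1)$ in the other. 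This is precisely $R_{s_i,s_j}(\cdot)+1$ for the family $\mathcal{V}$ with $\rho=1$, so by Theorem~3 each partial sum has modulus in $\{2^{n/2},2^{(n+e)/2}\}$ (determined solely by $\bar\delta$ and $\bar\lambda$ through the dichotomy that follows (\ref{eq9})) and phase in $\{\pm 1,\pm\omega\}$, with the linear coefficient of $\phi(z)$ given by the explicit formula already derived for family $\mathcal{V}$.

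\textbf{Main obstacle.} The remaining, and most delicate, step is to combine the two partial sums coherently. Because the magnitude condition depends only on $\bar\delta$ and $\bar\lambda$, both partial sums have the same modulus in every subcase; what varies is the linear coefficient $c$ of $\phi(z)$, which in the odd-$t$ piece is shifted by a term involving $1+\bar\delta$ because of the replacement $(\eta_i,\eta_j)\mapsto(\eta_i+1,\eta_j+1)$, and in the $v$-sums there is an additional $\omega^{\pm 2}=-1$ from the $+2$ summand. Summing the character values over the $2^e$-dimensional solution space of (\ref{eq9}) in each subcase, the two partial sums must either coincide (producing the $\pm 2^{n/2+1}$ and $\pm 2^{(n+e)/2+1}$ doublings), cancel exactly (producing the constant $-2$ or $2$), or combine with a relative factor of $\omega$ (producing the purely imaginary entries in (3c) and (4c)). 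I expect the technical bottleneck to be verifying this phase-coincidence or phase-cancellation systematically across all eight subcases of (3) and (4); once the small table of sign patterns is worked out, the five-value sets stated in the theorem follow immediately from the magnitudes already supplied by Theorem~3.
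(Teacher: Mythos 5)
Your overall decomposition is the same as the paper's: split the correlation sum over $t$ by parity, evaluate the shifts $\tau=0$ and $\tau=2^n-1$ directly, and for all other shifts reduce each partial sum to the exponential sum $\varsigma(\gamma_1,\gamma_2,\delta)=\sum_{x\in G_C}\omega^{Tr^n_1[(1+2\gamma_1-(1+2\gamma_2)\delta)x]+2(P(\lambda x)+P(\lambda\delta x))}$ already analyzed for the family $\mathcal{V}$ in Theorem~3. Up to that point you are on the paper's track, and your expressions for the partial sums agree with the paper's formulas $R_{u_i,u_j}(\tau)=\varsigma(\eta_i,\eta_j+1,\delta)+\varsigma(\eta_i+1,\eta_j,\delta)-2$, etc.

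The genuine gap is precisely the step you flag as the ``main obstacle'' and leave undone. The paper closes it with a single structural observation that your proposal is missing: $\varsigma(\gamma_1+1,\gamma_2,\delta)=\varsigma(\gamma_1,\gamma_2+1,\delta)^*$, so the two partial sums $\varsigma(\eta_i,\eta_j+1,\delta)$ and $\varsigma(\eta_i+1,\eta_j,\delta)$ are complex conjugates of each other. Consequently $R_{u_i,u_j}$ and $R_{v_i,v_j}$ are $\pm 2\operatorname{Re}\varsigma\mp 2$, hence automatically real, while $R_{u_i,v_j}=\pm(\varsigma-\varsigma^*)$ is automatically purely imaginary; substituting the value set $\{0,\pm 2^{n/2},\pm 2^{n/2}\omega,\pm 2^{(n+e)/2},\pm 2^{(n+e)/2}\omega\}$ for $\varsigma$ supplied by Theorem~3 then gives every value set in parts (3) and (4) with no case analysis at all. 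Your alternative plan --- tracking the linear coefficient of $\phi(z)$ through eight subcases and checking ``phase-coincidence or phase-cancellation'' by hand --- is not carried out, and your intermediate assertion that the two partial sums always share the same modulus is itself unjustified without the conjugation identity: in case (2) of the $\mathcal{V}$ analysis the modulus is $0$ or $2^{(n+e)/2}$ according to whether $\phi$ vanishes on the $2^e$-element solution space of (\ref{eq9}), and that condition involves the $\eta$'s, not only $\overline{\delta}$ and $\overline{\lambda}$. You should replace the case-by-case plan with the conjugate-pair observation; as written, the central step of the proof is missing.
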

\begin{proof}
In order to analysis easily, let
\begin{equation}
\label{eq11} \varsigma(\gamma_1,\gamma_2,\delta)=\sum_{x\in
G_{C}}\omega^{Tr^n_1[(1+2\gamma_1-(1+2\gamma_2)\delta)x]+2(P(\lambda
x)+P(\lambda\delta x))}.
\end{equation}
It is easy to check that
$\varsigma(\gamma_1+1,\gamma_2,\delta)=\varsigma(\gamma_1,\gamma_2+1,\delta)^*$, where $*$ denotes complex conjugate. \\
Similar to \cite{1}, the following facts can be easily checked. \\
1) if $\tau=2\tau_0+1$, then
\begin{displaymath}
\begin{aligned}
&R_{u_i,u_j}(\tau)=\varsigma(\eta_i,\eta_j+1,\delta)+\varsigma(\eta_i+1,\eta_j,\delta)-2,\\
&R_{v_i,v_j}(\tau)=-\varsigma(\eta_i,\eta_j+1,\delta)-\varsigma(\eta_i+1,\eta_j,\delta)+2,\\
&R_{u_i,v_j}(\tau)=\varsigma(\eta_i,\eta_j+1,\delta)-\varsigma(\eta_i+1,\eta_j,\delta).
\end{aligned}
\end{displaymath}
2) if $\tau=2\tau_0$, then
\begin{displaymath}
\begin{aligned}
&R_{u_i,u_j}(\tau)=\varsigma(\eta_i,\eta_j+1,\delta)+\varsigma(\eta_i+1,\eta_j,\delta)-2,\\
&R_{v_i,v_j}(\tau)=\varsigma(\eta_i,\eta_j+1,\delta)+\varsigma(\eta_i+1,\eta_j,\delta)-2,\\
&R_{u_i,v_j}(\tau)=-\varsigma(\eta_i,\eta_j+1,\delta)+\varsigma(\eta_i+1,\eta_j,\delta).
\end{aligned}
\end{displaymath}
Due to (\ref{eq3}),(\ref{eq11}) and the theorem 3, the theorem 4 is
proved.
\end{proof}
Similar to the proof of the theorem 2 above, or the proof of theorem
3 and theorem 7 \cite{1} the following theorem is obtained.
\begin{theorem}
the linear spans of the sequences in $\mathcal {W}$ are given as follows\\
(1) For $u_i\in \mathcal {W}$, the linear span $LS(u_i)$ of $u_i$ is
given by
$\displaystyle{LS(u_i)=\frac{n(n+e)}{2e}}$.\\
(2) For $v_i\in \mathcal {W}$, the linear span $LS(v_i)$ of $v_i$ is
given by $\displaystyle{LS(u_i)=\frac{n(n+e)}{2e}+2}$.
\end{theorem}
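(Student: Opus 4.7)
My plan is to mirror the proof of Theorem 2 (and the analogous argument in \cite{1}) by expressing $u_i$ as a single sum of trace terms, indexing these by Frobenius orbits in $GR^*(4,n)$, and reading off $LS(u_i)$ as the sum of the orbit sizes.

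First I unify the piecewise definition of $u_i$. Let $\Gamma = -\beta^{1/2} = 3\beta^{2^{n-1}} \in GR^*(4,n)$; since $-1$ has order $2$ and $\beta^{1/2}$ has odd order $2^n-1$, the element $\Gamma$ has order exactly $2(2^n-1)$. A direct check in $Z_4$ gives $-(1+2\eta_i) = 3 + 2\eta_i = 1+2(\eta_i+1)$, so $Tr_1^n[(1+2\eta_i)\Gamma^t]$ realizes the trace part of $u_i$ at both even and odd $t$. Moreover, every exponent $2^{ej}+1$ appearing in $P$ is odd, which forces $P(-x) = -P(x)$ and hence $2P(-x) = 2P(x)$ in $Z_4$; consequently
\begin{equation*}
u_i(t) = Tr_1^n\bigl[(1+2\eta_i)\Gamma^{t}\bigr] + 2P\bigl(\lambda\beta^{t/2}\bigr),
\end{equation*}
where the $2P$ summand has period dividing $2^n-1$ while the first summand carries the full period $2(2^n-1)$.

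Next I count contributions. The Frobenius $\sigma_1$ over $Z_4$ sends $\Gamma = \beta^{1/2} + 2\beta^{1/2}$ to $\beta + 2\beta = -\beta$, so the orbit $\{\sigma_1^k(\Gamma)\}_{k=0}^{n-1} = \{-\beta^{2^k}: 0\le k\le n-1\}$ has size $n$; hence the $\Gamma$-term contributes linear span $n$. Expanding $2P(\lambda\beta^{t/2})$ for $m = 2l$ produces $l-1$ terms $2Tr_1^n(\lambda_0^{2^{ej}+1}\beta^{(t/2)(2^{ej}+1)})$, each with cyclotomic coset of size $n$, together with one self-conjugate middle term $2Tr_1^{le}(\lambda_0^{2^{le}+1}\beta^{(t/2)(2^{le}+1)})$ of coset size $le = n/2$. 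Each such $\{0,2\}$-valued trace sequence has $Z_4$-linear span equal to its $Z_2$-linear span, namely the coset size.

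Third, the $\Gamma$-orbit lies entirely in $-G_C$, which is disjoint from $G_C$ because $2x \neq 0$ for $x \in G_C$, so it cannot meet any $\beta^{v_j}$-orbit; the $\beta^{v_j}$-orbits for distinct $v_j \in \{2^{ej}+1 : 1 \le j \le l\}$ are themselves pairwise disjoint cyclotomic cosets. The minimal polynomials therefore multiply and the linear spans add:
\begin{equation*}
LS(u_i) = n + (l-1)n + le = ln + le = \frac{nm}{2} + \frac{n}{2} = \frac{n(n+e)}{2e}.
\end{equation*}
For $v_i$, the difference $v_i - u_i$ is the period-$2$ pattern $(2,0,2,0,\ldots)$ with minimal monic polynomial $X^2-1 = (X-1)(X+1)$ and linear span $2$; since neither $1$ nor $-1$ lies in any of the orbits above, this factor is coprime to the $u_i$ minimal polynomial and therefore $LS(v_i) = LS(u_i)+2$.

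The main obstacle will be the cyclotomic-coset bookkeeping: one must confirm that for $1 \le j \le l-1$ the coset of $v_j = 2^{ej}+1$ modulo $2^n-1$ has size exactly $n$ (so it does not collapse to a proper subfield), that the coset of $v_l = 2^{le}+1$ is self-conjugate of size $le$, and that all these cosets are pairwise disjoint from one another and from the $\Gamma$-orbit in $-G_C$. Each verification is a standard cyclotomic-coset computation parallel to the arguments in the proof of Theorem 7 in \cite{1}.
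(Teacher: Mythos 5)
Your proposal is correct and follows the route the paper itself intends: the paper gives no actual proof of this theorem beyond the remark that it is ``similar to the proof of Theorem 2 above, or the proof of Theorem 7 in \cite{1}'', and that argument is exactly the trace-term/cyclotomic-coset count you carry out (unifying the two branches via $\Gamma=-\beta^{1/2}$ and adding the Frobenius-orbit sizes $n+(l-1)n+le=\frac{n(n+e)}{2e}$, then adding $2$ for the period-two perturbation in $v_i$). The bookkeeping you defer is routine and goes through: the cyclotomic coset of $2^{d}+1$ modulo $2^{n}-1$ has size $n$ for $0<d<n/2$ and size $n/2$ for $d=n/2$, two such cosets with $d,d'\in\{e,2e,\dots,le\}$ coincide only if $d+d'\equiv 0 \pmod n$, and the $\Gamma$-orbit lies in $-G_C$, disjoint from $G_C$, so the minimal polynomials are pairwise coprime and the spans add as you claim.
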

\section{Conclusions}
In this paper, we have proposed the new families of quadriphase
sequences with larger linear span and size. The maximum correlation
magnitude of proposed sequences family is bigger then that of the
related sequence in \cite{1}, and is smaller than that of the
related binary sequences family in \cite{2,3} with same parameters.
The proposed two families of quadriphase sequences with period
$2^n-1$ and $2(2^n-1)$ respectively for a positive integer $n=em$
where $m$ is an even positive can be take as an extensions of the
results in \cite{1} where $m$ is an odd positive.

\section {Acknowledgment}
This work was supported by National Science Foundation of China
under grant No.60773002 and 61072140, the Project sponsored by $SRF$
for $ROCS$, $SEM$, 863 Program (2007AA01Z472), and the 111 Project
(B08038).

%
%




\end{document}